\pgfplotsset{compat=1.18}
\newtheorem{thm}{Theorem}\crefname{thm}{Theorem}{Theorems}
\crefname{lem}{Lemma}{Lemmas}
\crefname{prop}{Proposition}{Propositions}
\crefname{cor}{Corollary}{Corollaries}
\newtheorem*{thm*}{Theorem}
\theoremstyle{definition}
\newtheorem{dfn}[thm]{Definition}\crefname{def}{Definition}{Definitions}
\crefname{exa}{Example}{Examples}
\crefname{rmk}{Remark}{Remarks}
\DeclareMathOperator{\tr}{tr}
\newcommand{\QMA}{{\sf QMA}}
\newcommand{\NP}{{\sf NP}}
\newcommand{\PP}{{\sf PP}}
\newcommand{\PostBQP}{{\sf PostBQP}}
\newcommand{\PSPACE}{{\sf PSPACE}}
\newcommand{\PreciseQMA}{{\sf PreciseQMA}}
\DeclareMathOperator{\poly}{poly}
\DeclareMathOperator{\acc}{acc}
\DeclareMathOperator{\rej}{rej}
\DeclarePairedDelimiter{\abs}{\lvert}{\rvert}
\title{Limits to black-box amplification in \texorpdfstring{$\QMA$}{QMA}}
\author[1]{Scott Aaronson}
\author[2]{Phillip Harris}
\author[3]{Freek Witteveen}
\affil[1]{University of Texas at Austin. Supported by the Simons Foundation}
\affil[2]{University of Bonn}
\affil[3]{QuSoft \& CWI, Amsterdam}
\date{}
\begin{document}

\maketitle

\begin{abstract}
We study the limitations of black-box amplification in the quantum complexity class $\QMA$. 
Amplification is known to boost any inverse-polynomial gap between completeness and soundness to exponentially small error, and a recent result (Jeffery and Witteveen, 2025) shows that completeness can in fact be amplified to be doubly exponentially close to~1. 
We prove that this is optimal for black-box procedures: we provide a quantum oracle relative to which no $\QMA$ verification procedure using polynomial resources can achieve completeness closer to~1 than doubly exponential, or a soundness which is super-exponentially small.
This is proven by making the oracle separation from (Aaronson, 2009) between $\QMA$ and $\QMA_1$ quantitative, using techniques from complex approximation theory.
\end{abstract}

\section{Introduction}

The complexity class $\QMA$ (Quantum Merlin--Arthur) is a quantum analogue of $\NP$, where given a problem instance, a prover Merlin sends a quantum witness to a polynomial-time quantum verifier Arthur, in order to convince Arthur whether the problem is a yes-instance or a no-instance.
Arthur then performs a polynomial-size quantum computation on the witness, after which he accepts or rejects the witness.
Quantum complexity classes typically allow some probability of error.
There are two parameters quantifying the allowed error: the \emph{completeness} $c$, the probability with which Arthur accepts a valid witness in the yes-case, and the \emph{soundness} $s$, the maximum acceptance probability in the no-case.
It is well known that the precise choice of these parameters does not matter as long as there is a non-negligible gap: by amplification techniques \cite{kitaev2002classical,marriott2005quantum}, one can boost a polynomially small gap $c - s = 1 / \poly$ to completeness $c = 1 - 2^{-\poly}$ and soundness $s = 2^{-\poly}$ exponentially close to 1 and 0 respectively.
This motivates the canonical definition, which takes $c = \frac23$ and $s = \frac13$.

A long-standing open question is whether the completeness can in fact be made \emph{perfect}.
We denote by $\QMA_1$ the variant of $\QMA$ where we take $c = 1$ (and $s = \frac13$, or any other constant).
The question whether $\QMA$ equals $\QMA_1$ is then: can every QMA protocol be modified so that Arthur always accepts a valid witness in the yes-case, while still rejecting no-instances with bounded probability?
There are a number of closely related complexity classes which allow for perfect completeness. This is the case for the variant with only classical randomness ${\sf MA}$, the variant with a classical proof and quantum verifier ${\sf QCMA}$ \cite{jordan2012QCMAisQCMA1}, the variant with multiple rounds ${\sf QIP}$ \cite{marriott2005quantum}, and the variant with an exponentially small soundness-completeness gap $\PreciseQMA$ \cite{fefferman2018complete}.

For $\QMA$, this question has resisted resolution, and one explanation for this is given by
Aaronson \cite{aaronson2009perfect}, who gave a barrier to proving $\QMA = \QMA_1$ using black-box techniques.
Here `black-box' refers to an amplification procedure that does not use any properties of the verification circuit, other than those that define it.
Concretely, \cite{aaronson2009perfect} proved that there exists a quantum oracle relative to which $\QMA \neq \QMA_1$. Any proof showing $\QMA = \QMA_1$ would have to break down in the presence of a quantum oracle, i.e. it would have to be quantumly nonrelativizing.
All known amplification techniques for ${\sf QMA}$ are of a black-box nature (so they are quantumly relativizing).

Recently, it was shown in Ref~\cite{jeffery2025infinitecounter} that one can reach completeness \emph{doubly exponentially} close to 1
\begin{align*}
    c = 1 - 2^{-2^{\poly}},
\end{align*}
in $\QMA$, improving over the previous best completeness which was exponentially close to 1. 
The techniques in~\cite{jeffery2025infinitecounter} are of a black-box nature. This raises the question: can one get even closer to perfect completeness with black-box amplification procedures?

In this work we show that the doubly-exponential bound is in fact \emph{optimal} in the black-box setting. 
We prove that there exists a quantum oracle relative to which no $\QMA$ protocol using polynomial resources can achieve completeness closer to~1 than doubly exponential.

The key idea in the oracle separation of \cite{aaronson2009perfect} is that for a simple choice of quantum oracle, the maximal acceptance probability of the verifier is an analytic function in one variable.
Perfect completeness would require this function to be constant on an interval, but the analyticity then implies that the maximal acceptance probability is \emph{always} 1, contradicting soundness.
We use the same quantum oracle, but give a slightly different analysis and use a standard result bounding the growth of trigonometric polynomials, to make this separation quantitative.
The key idea of the proof is as follows.
The quantum oracle is given by
\begin{align}\label{eq:oracle intro}
    U(\theta) = \begin{pmatrix}
        \cos(\theta) & \sin(\theta) \\ - \sin(\theta) & \cos(\theta)
    \end{pmatrix},
\end{align}
for $\theta \in [-\pi, \pi)$.
We then consider the problem where Arthur has to decide whether $\abs{\theta} \leq \pi - s$ (yes-instance) or $\theta = \pi$ (no-instance), with black-box access to $U(\theta)$.

\begin{figure}[t]
\begin{tikzpicture}
    \begin{axis}[
            axis lines=left,
            xmin=0, xmax=pi + 0.1,
            ymin=0, ymax=1.05,
            width=12cm, height=6.4cm,
            xtick={0,3*pi/4,pi},
            xticklabels={$0$,$3\pi/4$,$\pi$},
            ytick={0,1/3,1},
            yticklabels={$0$,$\frac13$,$1$},
            xlabel={$\theta$},
            ylabel={$p$},
            samples=400,
        ]

        \addplot [densely dashed, black] {1};
        \addplot [densely dashed, red] {1 - 0.03};
        \addplot [densely dashed, red]   {1/3};
        \addplot [densely dashed, blue, mark=none] coordinates {(3*pi/4, 0) (3*pi/4, 1.05)};
        \addplot [densely dashed, blue, mark=none] coordinates {(pi, 0) (pi, 1.05)};

        \addplot [very thick, samples=450, domain=0:pi]
        { 0.97/(1 + exp(-20*(3*pi/4 + 0.25 - x)))
            + (1 - 1/(1 + exp(-20*(7*pi/8 - x)))) * (1/3)*(cos(deg(8*x)))^2
            + 0.03*(cos(deg(24*x)))^2 / (1 + exp(-15*(3*pi/4 + 0.05 - x))) };
    \end{axis}
\end{tikzpicture}
\caption{The verifier would like to decide whether $\abs{\theta} \leq 3\pi/4$ (a yes-instance) or $\theta = \pi$ (a no-instance), given black-box access to the oracle defined in \cref{eq:oracle intro}. Here we give a sketch of what the maximal acceptance probability, optimized over all choices of witness, should look like, when we have completeness $c = 1 - \delta$ close to 1, and soundness $s = 1/3$.}
\label{fig:acceptance probability}
\end{figure}
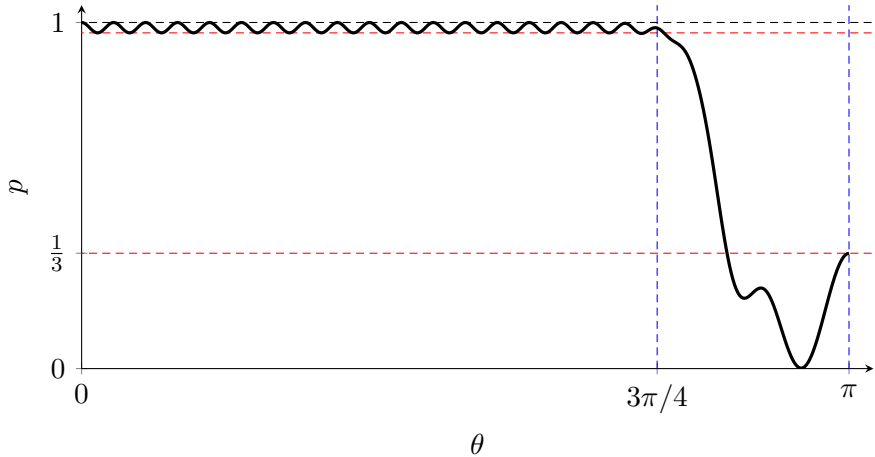

Suppose we are given some verifier for this task.
The idea is to consider the accepting measurement operator $P_{\acc}(\theta)$ and the rejecting measurement operator $P_{\rej}(\theta) = I - P_{\acc}(\theta)$.
The maximal acceptance probability $p(\theta)$, over all possible witnesses, equals the largest eigenvalue of $P_{\acc}(\theta)$.
If the verifier has completeness close to 1, and soundness $\frac13$, the maximal acceptance probability, as a function of the angle $\theta$, must look something like shown in \cref{fig:acceptance probability}.
We now study the following function:
\begin{align}\label{eq:determinant}
    p(\theta) = \det[P_{\rej}(\theta)].
\end{align}
If the verifier has completeness close to 1, $P_{\rej}(\theta)$ has an eigenvalue close to 0, and hence $p(\theta)$ must be be small on the accepting values of $
\theta$. On the other hand, we can use soundness to bound how small $p(\pi)$ can be.
We can then use the fact that $p(\theta)$ is a trigonometric polynomial, and a standard result on the growth of polynomials, to bound how close to 1 the completeness can be.
As a side comment, we note that this also simplifies the argument of \cite{aaronson2009perfect}: if one has perfect completeness, $p(\theta) = 0$ on an interval, but then $p(\theta)$, being a trigonometric polynomial and hence analytic, has to be identically zero for all $\theta$, implying that the soundness condition can not be satisfied.

We also investigate the analogous question for the soundness parameter in the definition of $\QMA$.
It is known that $\QMA$ with perfect soundness equals the complexity class $\sf{NQP}$ \cite{kobayashi2003quantum}, which is hard for the polynomial hierarchy \cite{fenner1999determining}, and thus likely different from $\QMA$.
The soundness can be amplified to be exponentially small with standard techniques.
Can one do even better? Can one get, analogous to the situation for completeness, to doubly-exponentially small soundness?
We show, using the same oracle but with the role of the yes- and no-instances reversed, and using
\begin{align}
    p(\theta) = \tr[P_{\acc}(\theta)]
\end{align}
that there is no black-box way to do so.
This completely settles what completeness and soundness parameters are achievable with black-box amplification techniques in $\QMA$.

A caveat is that the proof breaks down if the witness register is allowed to be infinite dimensional, as also observed in \cite{aaronson2009perfect}. This is not just a quirk of the proof technique: in \cite{jeffery2025infinitecounter} it is proven that (with black-box amplification techniques) one in fact can prove perfect completeness for $\QMA$, provided that one allows computation on an infinite-dimensional Hilbert space with an appropriate gate set.
This choice of gate set is such that the infinite-dimensional Hilbert space does not increase the computational power of ${\sf BQP}$ or ${\sf QMA}$.

\paragraph{Use of AI:}
In an earlier version of this paper, we used a different function in the completeness case, namely the \emph{rational} trigonometric function $r(\theta) = \tr[P_{\rej}(\theta)^{-1}]$, which was suggested to us by GPT-5-Thinking.  To analyze this $r$, we used a bound on the growth of complex rational functions \cite{gonvcar1969zolotarev}, which in turn required the no-instances to be an interval instead of a single point. See \cite{aaronson2025shtetl} for a discussion and comments. After the initial posting, one of us (PH, subsequently added as coauthor) suggested to replace $\tr[P_{\rej}(\theta)^{-1}]$ by the simpler $\det[P_{\rej}(\theta)]$.

\section{Preliminaries}

We first recall the definition of ${\sf QMA}$.

\begin{dfn}\label{dfn:qma}
    For parameters $c,s \in [0,1]$ with $c > s$, the class $\QMA_{c,s}$ consists of promise problems $L = (L_{\text{yes}}, L_{\text{no}})$ for which there exists a uniform family of polynomial-time quantum circuits $\{V_x\}_{x \in \{0,1\}^n}$, called verifiers, with the following properties:
    \begin{itemize}
        \item (Completeness) If $x \in L_{\text{yes}}$, then there exists a quantum witness state $\ket{\psi}$ such that
        \begin{align*}
            \Pr[V_x \text{ accepts } \ket{\psi}] \geq c.
        \end{align*}
        \item (Soundness) If $x \in L_{\text{no}}$, then for all quantum states $\ket{\psi}$,
        \begin{align*}
            \Pr[V_x \text{ accepts } \ket{\psi}] \leq s.
        \end{align*}
    \end{itemize}
    The class $\QMA$ is defined as $\QMA_{2/3,1/3}$, since any inverse-polynomial gap between $c$ and $s$ can be amplified to these parameters by standard techniques \cite{marriott2005quantum}. 
    The subclass $\QMA_1$ is defined as $\QMA_{1,s}$ for some constant $s < 1$, i.e., \emph{QMA with perfect completeness}.
\end{dfn}

We will also consider $\QMA$ with access to a quantum oracle $\mathcal U$.
Here, there is a collection of possible unitaries, and the verifier gets access to a unitary $U$ from this set.
The verifier has black-box access to $U$ its inverse, and to their controlled versions.
We then denote by $\QMA^{\mathcal U}$ the class of problems that can be solved as in \cref{dfn:qma} where Arthur additionally has access to a polynomial number of calls to $\mathcal U$.

\subsection{Trigonometric polynomials}

A \emph{trigonometric polynomial} of degree $d$ is a polynomial of degree $d$ in $\cos(\theta)$ and $\sin(\theta)$. 

We use a standard bound on how fast such polynomials can grow: if $p(\theta)$ is uniformly bounded on some interval, how large can it be outside the interval?
The solution is similar to the case of standard polynomials, where the extremal polynomials are Chebyshev polynomials.

We use the following bound; see Theorem 5.2.1 in \cite{borwein2012polynomials}:

\begin{thm}\label{thm:growth bound}
    Let $p(\theta)$ be a trigonometric polynomial of degree $d$. Then for $u \in (0,\pi/4]$,
    \begin{equation}
        \max_{\theta} \abs{p(\theta)} \leq \exp(8du) \max_{\abs{\theta} \leq \pi - u} \abs{p(\theta)}
    \end{equation}
\end{thm}

That is, if $p(\theta)$ is uniformly bounded on a subset of angles, we can bound the maximal value it can take.

\section{Limits to black-box QMA amplification}

In this section we prove tight black-box lower bounds for amplifying the completeness and soundness parameters of $\QMA$.
The following result shows that there is an oracle with respect to which $\QMA$ has completeness $c(n)$ which is at most doubly-exponentially close to 1, and soundness which is at most exponentially small.
Both bounds follow from the same oracle construction (as in \cite{aaronson2009perfect}) together with the polynomial growth bound in \cref{thm:growth bound}.

\begin{thm}[Limits to black-box amplification]\label{thm:black-box-limits}
There exists a quantum oracle $\mathcal U$ such that the following hold for any $\QMA$ verification procedure using $\poly(n)$ resources:
\begin{enumerate}
    \item\label{it:completeness} \textbf{(Completeness barrier).} For any black-box $\QMA$ amplification procedure that achieves completeness $c = 1-\delta$ and soundness $s = 1/3$, we have
    \begin{align*}
        \delta \geq 2^{-2^{\poly(n)}}.
    \end{align*}
    More precisely, relative to $\mathcal U$,
    \begin{align*}
        \QMA^{\mathcal U} \neq \QMA_{c(n),\,1/3}^{\mathcal U}
        \qquad \text{for } c(n)=1-o\mleft(2^{-2^{\poly(n)}}\mright).
    \end{align*}
    \item\label{it:soundness} \textbf{(Soundness barrier).} For any black-box $\QMA$ amplification procedure that achieves completeness $c = 2/3$ and soundness $s=\delta$, we have
    \begin{align*}
        \delta \geq 2^{-\poly(n)}.
    \end{align*}
    In particular, relative to $\mathcal U$,
    \begin{align*}
        \QMA^{\mathcal U} \neq \QMA_{2/3,\,s(n)}^{\mathcal U}
        \qquad \text{for } s(n)=o\mleft(2^{-\poly(n)}\mright).
    \end{align*}
\end{enumerate}
\end{thm}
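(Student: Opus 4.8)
The plan is to reduce both barriers to a single quantitative statement about the oracle problem of \cref{eq:oracle intro}: given black-box access to $U(z)$, decide whether $z \in E$ or $z \in F$ for the two arcs $E = \{e^{i\theta} : \theta \le \pi/3\}$ and $F = \{e^{i\theta} : \theta \ge 2\pi/3\}$. Following \cite{aaronson2009perfect}, this problem lies in $\QMA^{\mathcal U}$ with the standard parameters $(2/3,1/3)$, so it suffices to show that no polynomial-resource verifier can do much better. I would fix such a verifier $V$ making $T = \poly(n)$ oracle calls with a witness register of dimension $q = 2^{\poly(n)}$, and let $P(z) = \bra{0} W(z)^\dagger \Pi W(z) \ket{0}$ be the accepting measurement operator on the witness register, where $W(z)$ is the full verifier unitary and $\Pi$ projects onto the accept subspace. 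Then $0 \preceq P(z) \preceq I$, the optimal acceptance probability is $p(z) = \lambda_{\max}(P(z))$, and correctness of $V$ as a $\QMA_{c,s}^{\mathcal U}$ verifier means precisely that $p(z) \ge c$ for all $z \in E$ and $p(z) \le s$ for all $z \in F$.

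The first step, and the crux of the argument, is to replace the non-rational function $p(z)$ by a rational proxy. The entries of $U(z)$ are $\cos\theta = \tfrac12(z + z^{-1})$ and $\pm\sin\theta = \pm\tfrac1{2i}(z - z^{-1})$, i.e.\ Laurent polynomials in $z$ of degree $1$, and the same holds for controlled and inverse applications; hence $W(z)$ has Laurent-polynomial entries of degree $\le T$. On the unit circle $\bar z = z^{-1}$, so after the analytic substitution $\bar z \mapsto z^{-1}$ in $W(z)^\dagger$ (as in the analyticity argument of \cite{aaronson2009perfect}), the operator $P(z)$ has entries that are Laurent polynomials of degree $\le 2T$, hence rational functions of degree $d = O(T) = \poly(n)$ agreeing with the true acceptance operator on $E \cup F$.

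For the completeness barrier I would set $r(z) = \tr[(I - P(z))^{-1}]$. Since $0 \preceq P(z) \preceq I$, the eigenvalues of $(I - P(z))^{-1}$ are at least $1$, so completeness $c = 1-\delta$ on $E$ forces the smallest eigenvalue of $I - P(z)$ to be $\le \delta$ and hence $r(z) \ge 1/\delta$ there, while soundness $s = 1/3$ on $F$ bounds every eigenvalue of $(I-P(z))^{-1}$ by $3/2$ and hence $r(z) \le \tfrac32 q$ there. By \cref{lem:inverse analytic}, $r$ is rational of degree $D = O(qd) = 2^{\poly(n)}$, the matrix inversion being exactly what inflates the degree to exponential. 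Applying \cref{thm:zolotarev} to the positive-capacity arcs $E,F$ gives $\tfrac{1/\delta}{(3/2)q} \le e^{CD}$, i.e.\ $\delta \ge \tfrac{2}{3q}e^{-CD} = 2^{-2^{\poly(n)}}$. For the soundness barrier I would instead use the \emph{polynomial}-degree proxy $r(z) = \tr[P(z)]$: completeness $c = 2/3$ gives $r(z) \ge p(z) \ge 2/3$ on $E$, soundness $s = \delta$ gives $r(z) \le q\delta$ on $F$, and now $r$ has degree only $D = O(d) = \poly(n)$ since no inversion occurs, so \cref{thm:zolotarev} yields $\delta \ge \tfrac{2}{3q}e^{-CD} = 2^{-\poly(n)}$. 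The separations in \ref{it:completeness} and \ref{it:soundness} then follow by contradiction with these per-verifier bounds, which hold uniformly over $z \in E \cup F$.

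The main obstacle I anticipate is the degree bookkeeping, and in particular pinning down the asymmetric role of the witness dimension $q = 2^{\poly(n)}$. In the completeness case $q$ enters through the matrix inversion and raises the degree $D$ to $2^{\poly(n)}$, producing the doubly-exponential $e^{-CD}$; in the soundness case $q$ enters only as the harmless prefactor $1/q$ while the degree stays polynomial, producing a merely exponential bound. Getting this dichotomy exactly right --- together with the subtlety that $p(z)$ itself is not rational and must be accessed only through the eigenvalue inequalities sandwiching the rational $r(z)$, and the analytic-continuation step $\bar z \mapsto z^{-1}$ that turns the trigonometric acceptance operator into a genuine rational function of $z$ --- is where the care is needed; the invocation of \cref{thm:zolotarev} and the membership of the problem in $\QMA^{\mathcal U}$ are then comparatively routine.
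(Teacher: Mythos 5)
Your proposal is correct and follows essentially the same route as the paper's own proof: the same oracle and arcs, the same rational proxies $r(z)=\tr[(I-P(z))^{-1}]$ and $r(z)=\tr[P(z)]$, the same bounds on $E$ and $F$, and the same application of \cref{lem:inverse analytic} and \cref{thm:zolotarev}, including the key observation that matrix inversion inflates the degree to $O(qt)$ in the completeness case while the trace keeps it at $O(t)$ in the soundness case. The degree bookkeeping and the asymmetric role of $q$ that you flag as the main obstacle are handled exactly as in the paper, so there is no gap.
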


\begin{proof}
We use the same quantum oracle as in \cite{aaronson2009perfect}. 
For $\theta \in (-\pi,\pi]$, define
\begin{align*}
U(\theta)=
\begin{pmatrix}
\cos\theta & \sin\theta\\
-\sin\theta & \cos\theta
\end{pmatrix}.
\end{align*}

The promise problem is to decide whether the oracle parameter satisfies $\abs{\theta} \leq \pi - u$ for some arbitrary constant $u \in (0,\pi/4]$, or $\theta = \pi$.
Which of these two cases corresponds to yes-instances and no-instances is reversed for the proof of the completeness and soundness barrier, but in either case it is obvious that Arthur can do so with constant soundness and completeness (and even with trivial witness), so the problem is in $\QMA$.
We will now show, however, that there are limits to how close the completeness and soundness can get to 1 and 0 respectively using polynomial resources.

Consider any $\QMA$ verifier $V$ that, on input $x\in\{0,1\}^n$, uses
\begin{itemize}
    \item $t=t(n)=\poly(n)$ calls to the oracle,
    \item a witness register of $m=m(n)=\poly(n)$ qubits; let $q=2^{m}$ denote its dimension.
\end{itemize}
Let $P_{\acc}(\theta)$ denote the POVM element corresponding to acceptance and let $P_{\rej}(\theta) = I - P_{\acc}(\theta)$.
Write the eigenvalues of $P_{\acc}(\theta)$ as $1 \geq \lambda_1(\theta) \geq \cdots \geq \lambda_q(\theta) \geq 0$. 
For each fixed $\theta$, the optimal acceptance probability, over all choices of witness, equals $\lambda_1(\theta)$.

Every matrix element of $U(\theta)$ and $U(\theta)^\dagger$  (and hence also of their controlled versions) is affine in $\cos(\theta)$ and $\sin(\theta)$.
Hence, by expanding the verifier circuit and projecting on the accept outcome, each entry of $P_{\acc}(\theta)$ and $P_{\rej}(\theta)$ is a trigonometric polynomial of degree $2t$ in $\theta$.

\paragraph{Completeness barrier}
We let $\abs{\theta} \leq \pi - u$ be the yes-instances, and $\theta = \pi$ be the no-instance.
Assume the verifier achieves completeness $1-\delta$ and soundness $1/3$:
\begin{align*}
\lambda_1(\theta)\geq 1-\delta \text{ for yes-instances},
\qquad
\lambda_1(\theta)\leq \tfrac{1}{3} \text{ for a no-instance.}
\end{align*}
We consider the following function:
\begin{align*}
p(\theta))=\det\bigl[P_{\rej}(\theta) \bigr]=\prod_{i=1}^q (1-\lambda_i(\theta)) \, ,
\end{align*}
which is a trigonometric polynomial of degree $2tq$.

We now bound $p$ for yes- and no-instances.
For yes-instances, $1 - \lambda_1(\theta) \leq \delta$, so $p(\theta) \leq \delta$.
For the no-instance, all $\lambda_i(\theta) 
\leq 1/3$ and hence $p(\pi) \geq (2/3)^q$.

By \cref{thm:growth bound},
\begin{align*}
    (2/3)^q \leq \max_{\theta} \abs{p(\theta)} \leq \exp(16utq) \max_{\abs{\theta} \leq \pi- u} \abs{p(\theta)} \leq \exp(16utq) \delta.
\end{align*}
In other words,
\begin{align}
    \delta \geq (2/3)^q \exp(-16utq)
\end{align}
and since $q = 2^{\poly(n)}$, and $t = \poly(n)$, this proves \ref{it:completeness}.

\paragraph{Soundness barrier}
We let $\abs{\theta} \leq \pi - u$ be the no-instances, and $\theta = \pi$ be the yes-instance.
Assume the verifier achieves completeness $2/3$ and soundness $\delta$, so
\begin{align*}
    \lambda_1(\theta) &\leq \delta \text{ for no-instances}\\
    \lambda_1(\theta) &\geq \frac23 \text{ for a yes-instance.}
\end{align*}
We now consider
\begin{align*}
    p(\theta) = \tr[P_{\acc}(\theta)] = \sum_{i=1}^q \lambda_i(\theta).
\end{align*}
This is a trigonometric polynomial of degree $2t$ in $\theta$.
We can bound it as
\begin{align*}
    p(\theta) &\leq q \lambda_1(\theta) \leq q \delta \text{ for no-instances,}\\
    p(\theta) &\geq \lambda_1(\theta) \geq \frac{2}{3} \text{ for a yes-instance.}
\end{align*}
By \cref{thm:growth bound},
\begin{align*}
    \frac23 \leq \max_{\theta} \abs{p(\theta)} \leq \exp(16ut) \max_{\abs{\theta} \leq \pi- u} \abs{p(\theta)} \leq \exp(16ut) q \delta.
\end{align*}
In other words,
\begin{align}
    \delta \geq \frac{2}{3q} \exp(-16ut)
\end{align}
and since $q = 2^{\poly(n)}$, and $t = \poly(n)$, this proves \ref{it:soundness}.

\end{proof}

Note that in the proof of \cref{thm:black-box-limits}, the crucial difference between the completeness and the soundness cases is that the degree of the function $p(\theta)$ does not depend on the witness dimension $q$ for the soundness.

Since black-box amplification procedures can also \emph{achieve} completeness $c = 1 - 2^{-2^{\poly(n)}}$ and soundness $s = 2^{-\poly(n)}$, this gives the optimal completeness and soundness parameters for $\QMA$ which can be achieved through black-box reductions.

\section{Discussion}
While previously it was known how to achieve completeness and soundness exponentially close to 1 and 0 respectively, this work, together with \cite{jeffery2025infinitecounter}, shows that with black-box reductions one can achieve completeness doubly exponentially close to 1, and soundness exponentially small, and that this is optimal. The asymmetry between soundness and completeness is natural from the definition: completeness only requires a single eigenvalue of the accepting measurement operator to be very close to 1, while soundness is a condition on all possible witness states and requires \emph{all} eigenvalues of the accepting measurement operator to be close to 0.

Let us call attention to something conceptually strange about our result.  Recall that $\QMA \subseteq \PP = \PostBQP$, with the containment believed to be strict. Nevertheless, we saw in this paper that ``the approximation theory object corresponding to $\QMA$,'' namely the largest eigenvalue of an $\exp(n)\times \exp(n)$ Hermitian matrix of low-degree polynomials, is in some respects \textit{more} powerful than ``the approximation theory object corresponding to $\PP$,'' namely a low-degree rational function. In particular, the former allows amplification to doubly exponentially small completeness error, while the latter does not.

How is this possible? In our view, the resolution is simply that, to prove $\QMA \subseteq \PP$, it suffices to calculate only a crude estimate of the largest eigenvalue in question---something that \textit{is} possible using low-degree polynomials. Indeed, if we ask for more precise information about the largest eigenvalue, we get the class $\PreciseQMA$, which equals $\PSPACE$ \cite{fefferman2018complete} and hence is presumably more powerful than $\PostBQP = \PP$.

Of course, the central open question we leave is whether $\QMA = \QMA_1$ for some, or all, finite-dimensional gate sets.


\subsection*{Acknowledgments}
We thank Stacey Jeffery for helpful discussions. FW was supported by the European Union (ERC, ASC-Q, 101040624).

\bibliographystyle{alpha}
\bibliography{references}

\end{document}